\def\C{\mathrm{\,\raise 0.33ex\hbox{\scriptsize\bf(}\!\!\!C}}
\def\prob#1{\mbox{{\rm Prob}}\{#1\}}
\newenvironment{proof}{\textbf{Proof.}}{\hfill \textbf{$\Box$}}
\def\trace{\operatorname{Tr}}
\newcommand{\tbl}[1]{Table~\ref{table.#1}}
\newcommand{\tbllabel}[1]{\label{table.#1}}
\newcommand{\ket}[1]{ {\left| #1 \right\rangle} }
\newcommand{\aket}[1]{ {\left\langle #1 \right|}}
\newcommand{\braket}[2]{ {\langle #1 | #2 \rangle}}  
\newcommand{\braOPket}[3]{ {\langle #1 | #2 | #3 \rangle}}  
\newcommand{\infoGain}{I_{\mbox{\scriptsize gain}}}
\newtheorem{theorem}{Theorem}[section]
\newtheorem{lemma}{Lemma}[section]
\begin{document}

\title{Private Database Queries Using Quantum States with Limited Coherence Times}
\author{Tad Hogg\\{\small Hewlett-Packard Laboratories}\\{\small Palo Alto,CA} \and Li Zhang\\{\small Microsoft Research Silicon Valley}\\{\small Mountain View, CA}}
\maketitle

\begin{abstract}
We describe a method for private database queries using exchange
of quantum states with bits encoded in mutually incompatible
bases. For technology with limited coherence time, the database
vendor can announce the encoding after a suitable delay to allow
the user to privately learn one of two items in the database
without the ability to also definitely infer the second item. This
quantum approach also allows the user to choose to learn other
functions of the items, such as the exclusive-or of their bits,
but not to gain more information than equivalent to learning one
item, on average. This method is especially useful for items
consisting of a few bits by avoiding the substantial overhead of
conventional cryptographic approaches.
\end{abstract}

\section{Introduction}

Quantum information processing~\cite{chuang00} provides potentially
significant performance improvements over conventional techniques.
One example is quantum computation with its ability to rapidly solve
problems, such as factoring~\cite{shor94}, which appear to be
otherwise intractable. However, implementing machines with enough
bits and coherence time to solve computational problems difficult
enough to be of practical interest is a major challenge. Another
application, quantum cryptography, is feasible today for exchanging
keys over distances of tens of kilometers. A third application area
is to quantum economic mechanisms, which can offer benefits with
only a few qubits which should be feasible to implement relatively
soon.

Early quantum information technology is likely to be characterized
by few operations before decoherence, limited ability to store
coherent states and communication involving limited entanglement,
particularly restricted to pairwise entangled states. Such limited
technology will not provide significant computational advantages
over conventional techniques. Nevertheless, limited quantum
information capabilities provide alternative economic mechanisms for
situations benefiting from correlated behaviors among participants
and the information security provided by quantum states. Examples
include provisioning public goods~\cite{chen03,zhang03},
coordinating random choices without
communication~\cite{huberman03,mura03} and
auctions~\cite{harsha06,guha07}. These economic methods can function
with limited quantum information technology, but do not require it.
In contrast, the private database query method described in this
paper relies on the difficulty of maintaining coherence for long
times to be economically viable.

Privacy-enhanced mechanisms can be instrumental in encouraging
beneficial transactions in situations where participants face
economic or other costs if their information is revealed to
others. Examples based on cryptographic methods include allowing
long-term surveys on sensitive social or medical
topics~\cite{huberman02} and auctions~\cite{naor99}. The problem
treated in this paper arises when a user wishes to purchase some
information from a vendor without revealing what information is
desired, but also not paying for additional information (e.g.,
purchasing the entire database).

In the remainder of this paper, we first describe the private
database query problem in its most basic case: selecting one of
two bits. We then prove, by applying the generalized entropic uncertainty relations proven in~\cite{maassen88}, that the user can learn at most one bit under
the assumption that maintaining coherence beyond a limited time is
not possible with the technology available or too costly compared
to the economic value of the information. We then briefly consider
the generalization to larger databases with many bits of
information on multiple items, and conclude with a discussion of
possible applications.

The mechanism provided here differs from providing private
information exchange with cryptographic methods (i.e., learning
exactly one bit and nothing more), or quantum attacks relying on
more advanced quantum technology such as creating and storing
entangled states until completing the protocol~\cite{lo96}. This
illustrates the importance of understanding plausible capabilities
of adversaries, especially in the context of an emerging
technology where advanced capabilities are likely to be too
expensive (or not available) to justify the cost compared to, say,
just purchasing the additional information from the vendor.

The problem we deal with in this paper is known as symmetrically
private information retrieval (SPIR) or oblivious transfer (OT) in
the cryptography community.  We refer to~\cite{gasarch04} for an
excellent survey of the subject.  In the classical computation
model, one can achieve computationally secure SPIR with a single
server under appropriate computational hardness
assumptions~\cite{kushilevitz97,cachin99,naor99-2,mishra00}. When
there are multiple servers which do not communicate with each
other, one can design information theoretically secure
SPIR~\cite{gertner00}. The emphasis in those studies is on
reducing the communication complexity.

Quantum channels allow reducing the communication complexity in the
case of more than one server~\cite{kerenidis03}. The method studied
in this paper only uses one way communication from the vendor to the
user. Therefore, the privacy of the user is guaranteed, and only the
privacy of the vendor is of concern. In such a model, if the length
of the database is $n$ bits, the vendor must communicate $\Omega(n)$
bits of information, and it is impossible to guarantee the vendor's
privacy in the information theoretic sense even with prepared
entangled states between the vendor and the
user~\cite{nayak99,ambainis02}. In this paper, we show that when the
user can only store entangled states for a short time, we may
achieve information theoretic SPIR. This has similar flavor to the
previous study in which the user is memory
constrained~\cite{cachin98}.

\section{One out of two exchange}

In this section, we consider the case when the vendor has a
database of two items, each with $m$ bits, and wishes to deliver
one and only one item to the user according to the user's private
choice, which is not revealed to the vendor.

The vendor picks an encoding for the value of each message,
represented as $2m$ bits. These bits are sent to the user who then
must measure them, in some choice of basis, and wait for the vendor
to announce the choice of encoding. Knowing the measurement outcome
and the encoding allows the user to determine all $m$ bits of one
item. To prevent the user from learning both messages (with
probability 1), it is important that the user's measurement take
place before the vendor announces the encoding -- otherwise the user
could invert the operator producing the announced encoding and
recover both items. The protocol ensures this based on a limited
coherence time of the available technology -- the vendor simply
waits until well past the coherence time before announcing the
encoding choice.

\subsection{Single bit exchange}\label{sec:single:example}

As an illustrative example, consider a database with two items, each
with one bit of information (e.g., a recommendation to buy or sell a
company's stock). The vendor selects two maximally incompatible
measurement bases:
\begin{equation}
\ket{0},\;\ket{1}
\end{equation}
and
\begin{equation}
\frac{1}{\sqrt{2}}(\ket{0}+\ket{1}),\;\frac{1}{\sqrt{2}}(\ket{0}-\ket{1})
\end{equation}
The vendor encodes the database into a superposition of the two bits
such that measurement in these bases reveals the bit value
corresponding to the first or second item in the database,
respectively.

To do this, the vendor encodes the value for each item in two bits,
randomly choosing one of two encodings. For the first encoding, the
vendor specifies the first and second bits using the first and
second of these bases, respectively. The superposition sent to the
user is one of the following, according to whether the database
values are $00$, $01$, $10$ or $11$, respectively:
\begin{eqnarray*}
\frac{1}{\sqrt{2}}(\ket{00}+\ket{01})&&\cr
\frac{1}{\sqrt{2}}(\ket{00}-\ket{01})&&\cr
\frac{1}{\sqrt{2}}(\ket{10}+\ket{11})&&\cr
\frac{1}{\sqrt{2}}(\ket{10}-\ket{11})&&\cr
\end{eqnarray*}

The second encoding specifies the bits using the same bases, but
forms the superposition of the two bits in the opposite order. The
corresponding superpositions for database values $00$, $01$, $10$ or
$11$, respectively, are:
\begin{eqnarray*}
\frac{1}{\sqrt{2}}(\ket{00}+\ket{10})&&\cr
\frac{1}{\sqrt{2}}(\ket{00}-\ket{10})&&\cr
\frac{1}{\sqrt{2}}(\ket{01}+\ket{11})&&\cr
\frac{1}{\sqrt{2}}(\ket{01}-\ket{11})&&\cr
\end{eqnarray*}

The user can obtain the single bit associated with either item by
electing to measure in one of the two announced bases.

For example, suppose the bit values in the database are $01$. If the
vendor chooses the first encoding, the user receives the state
$\ket{\psi}=(\ket{00}-\ket{01})/\sqrt{2}$ as represented in the
first basis. Measurement in the first basis gives either $00$ or
$01$, and the user will use this measured result to learn the value
for the first item is $0$ once the vendor announces the choice of
encoding. Expressed in the second basis, the state $\ket{\psi}$ is
$(\ket{01}+\ket{11})/\sqrt{2}$. Thus choosing to measure in the
second basis gives the user either $01$ or $11$, which specifies the
value of the second item is $1$.

On the other hand, if the vendor chooses the second encoding, the
user receives the state $\ket{\phi}=(\ket{00}-\ket{10})/\sqrt{2}$.
Measurement in the first basis gives either $00$ or $10$, and the
user will learn the value for the first item is $0$ (based on the
2nd bit of either of these outcomes) once the vendor announces the
choice of encoding. Expressed in the second basis, the state
$\ket{\phi}$ is $(\ket{10}+\ket{11})/\sqrt{2}$. Thus choosing to
measure in the second basis gives either $10$ or $11$, indicating
the value of the second item is $1$ (based on the 1st bit of either
of these outcomes).

In either of these cases, the measured outcome in one basis gives
no information on the value of the item associated with the other
basis.

The user could instead choose any other basis for the measurement,
or even use different bases for the two bits. Such choices can
reveal a function of the bit values for both items, e.g., their
exclusive-or. That is, the user could learn whether both bits in the
database are the same without learning anything about their values
(e.g., which could be used as a recommendation regarding a joint
derivative instrument on both companies, such as a recommendation of
whether they will both change in the same direction). However, as
described below, no choice of basis allows the user to learn more
than one bit of information about the database, on average. Thus in
contrast to conventional (i.e., cryptographic) methods for private
database query, the user has a wider set of options than just
picking one of the database bits to learn.

If the user knows the encoding used by the vendor, then instead of
selecting a basis to measure the states, the user could apply the
inverse of the 2-qubit encoding operation to produce values for both
bits in the database upon final measurement.  However, doing this
using the wrong encoding gives no information about either bit. Thus
with limited coherence time, the vendor simply waits longer than
that interval before announcing the choice of encoding. In that case
the user must make a measurement \emph{before} learning the
encoding. After learning the encoding, the user would then know
whether both or neither of the bits are revealed, but would no
longer have the original quantum state. Each alternative occurs with
probability $1/2$ so, on average, only one bit of information from
the database is revealed. The question now is whether there exists
some basis for the user to learn more than one bit of information.
We show this is impossible in the following sections.

\subsection{General formulation}\label{sec:single:general}

Suppose the vendor has two one-bit values, $d_0$ and $d_1$. These
two values specify the vendor's database state $d=\ket{d_0d_1}$. The
vendor has two encoding operators acting on the two bits of the
database, specified as $4\times 4$ encoding matrices $E_0$ and
$E_1$. These matrices are unitary, i.e., $E_i^\dagger = E_i^{-1}$
where $E_i^\dagger$ denotes the adjoint (i.e., complex conjugate
transpose) of $E_i$. The vendor announces these two operators to the
user.

The vendor then randomly picks one of the two encodings, say $E_i$,
and sends the state $e = E_i d$ to the user. That is, $e$ is column
$d$ of the encoding matrix $E_i$. The user selects a measurement
basis, given as the columns of a unitary matrix $M$. This
measurement is a standard projective or von-Neumann measurement. We
consider the more general POVM case in Section~\ref{sec:POVM}. Thus
the user measures the state $M e$, obtaining outcome $j$ with
probability $P(j|d,i)=|(M e)_j|^2 = |(M E_i)_{j,d}|^2$, conditioned
on the vendor's choice of encoding $i$ and the value of the database
$d$. The user is free to choose any basis, but selects $M$ and
performs the measurement without knowledge of which encoding the
vendor selected. After the measurement, the vendor announces the
encoding choice $i$.

From the observation $j$ and choice of encoding $i$ the user can use
Bayes' theorem to obtain a posterior probability distribution over
the values of the database. Specifically,
\begin{equation}\label{eq:Bayes}
P(d|j,i) = \frac{P(j|d,i) P(d)}{\sum_{d'} P(j|d',i) P(d')}
    = \frac{P(j|d,i)}{\sum_{d'} P(j|d',i)}
\end{equation}
where $P(d)$ is the prior distribution on the database, and the last
expression follows from our assumption this prior is uniform, i.e.,
$P(d)=1/4$ is independent of $d$. The sum in the denominator is
$\sum_{d'}|(M E_i)_{j,d'}|^2 =1 $ since the matrix $M E_i$ is
unitary. Thus the user's knowledge of the database, given by the
distribution on the values $d$, is $P(d|j,i) =|(M E_i)_{j,d}|^2$.

The uncertainty in the user's knowledge of the database after this
procedure is the entropy of this distribution, $h_{j,i} =
H(\{P(0|j,i),\ldots,P(3|j,i)\})$ where for a probability
distribution $P=\{p_0,\dots,p_{n-1}\}$, $H(P)=-\sum_k p_k\log p_k$
is its entropy\footnote{Throughout the paper, all the logarithms are
base $2$ unless explicitly stated.}.

Since each encoding choice is equally likely and not known to the
user at the time of measurement, on average the user's remaining
uncertainty about the database is $h_j = (h_{j,0}+h_{j,1})/2$. The
entropy of the prior distribution, $H(1/4,1/4,1/4,1/4)=2$ so the
amount of information the user gains, averaged over the vendor's
choice of encoding, is $2-h_j$. That is, this is the expected value
of the reduction in the user's uncertainty (i.e., entropy) of the
inferred distribution over the database items when the vendor
chooses each encoding with equal probability.

To bound the user's information gain, we need a lower bound on
$h_j$. We obtain such a bound as a special case of the generalized
entropic uncertainty relations~\cite{maassen88}. For a complex unit
vector $u=(u_0,\dots,u_{n-1})^t\in \C^n$, define
$H_2(u)=H(|u_0|^2,\cdots,|u_{n-1}|^2)=-\sum_i p_i\log p_i$ where
$p_i=|u_i|^2$.  For any two $n\times n$ matrices $A$ and $B$, let
$c(A,B)=L_\infty(A B)$, where $L_\infty(X)$ is the infinity-norm for
a matrix $X$, i.e., $L_\infty(X)=\max_{i,j} |X_{ij}|$.
In~\cite{maassen88}, it is shown that
\begin{theorem}\label{thm:entropy}\cite{maassen88}
For any vector $u$ and unitary matrices $A$ and $B$,
$$H_2(Au)+H_2(Bu)\geq -2\log c(A,B)\,.$$
\end{theorem}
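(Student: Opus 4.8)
The plan is to recover this as the standard Maassen--Uffink entropic uncertainty relation, whose engine is the Riesz--Thorin interpolation theorem applied to the unitary change-of-basis operator connecting the two measurement distributions. Writing $v = Au$ and $w = Bu$, both are unit vectors in $\C^n$ since $A$ and $B$ are unitary, and $w = Cv$ for the unitary matrix $C = BA^{-1}$. The entries of $C$ are precisely the pairwise overlaps of the two orthonormal bases determined by $A$ and $B$, so $L_\infty(C) = c(A,B)$ (matching the definition of $c(A,B)$ up to the convention for whether a basis is represented by its vectors or their conjugates). The distributions whose entropies must be bounded are $p_i = |v_i|^2$ and $q_i = |w_i|^2$, so that $H_2(Au) = H(p)$ and $H_2(Bu) = H(q)$.

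First I would establish an operator-norm bound on $C$ viewed as a map between sequence spaces. The two endpoint estimates are $\|C\|_{2\to 2} = 1$, since the largest singular value of a unitary matrix is $1$, and $\|C\|_{1\to\infty} = \max_{i,j}|C_{ij}| = c(A,B)$, which is a one-line H\"older estimate saturated by a standard basis vector. Interpolating between these endpoints, for every conjugate pair $1/p + 1/q = 1$ with $1 \le p \le 2$, yields $\|C\|_{p\to q} \le c(A,B)^{\,2/p - 1}$, and hence $\|w\|_q \le c(A,B)^{\,2/p-1}\,\|v\|_p$.

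Next I would convert this $\ell^p$ inequality into R\'enyi entropies and pass to the Shannon limit. Setting $\alpha = p/2 \in [\tfrac12,1]$, the conjugate exponent corresponds to the order $\beta = q/2 = \alpha/(2\alpha - 1)$, and a short computation gives $\log\|v\|_p = \tfrac{1-\alpha}{2\alpha}H_\alpha(p)$ and $\log\|w\|_q = -\tfrac{1-\alpha}{2\alpha}H_\beta(q)$, where $H_\alpha(p) = \tfrac{1}{1-\alpha}\log\sum_i p_i^\alpha$. Taking logarithms in the norm inequality and dividing through by the positive factor $(1-\alpha)/(2\alpha)$ collapses everything to $H_\alpha(p) + H_\beta(q) \ge -2\log c(A,B)$, the R\'enyi form at conjugate orders. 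Letting $\alpha \to 1^-$, so that $\beta \to 1$ and $p \to 2$, both R\'enyi entropies converge to the Shannon entropies $H(p) = H_2(Au)$ and $H(q) = H_2(Bu)$, which is the claim.

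I expect the two ends of this argument, rather than the algebra between them, to be the main obstacles. The interpolation step relies on the Riesz--Thorin theorem (or a self-contained finite-dimensional proof of the same $p\to q$ norm bound), and the Shannon limit is delicate because the norm inequality degenerates to the trivial identity $0 \le 0$ at $p = 2$, so all of the content lives in the first-order behaviour as $\alpha \to 1$. Dividing by $(1-\alpha)$ \emph{before} taking the limit, while tracking the sign carefully, is what converts the vanishing estimate into the stated strict entropic bound, and getting that sign right is the step most prone to error.
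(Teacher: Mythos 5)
The paper does not prove this theorem; it is quoted verbatim from Maassen and Uffink~\cite{maassen88}, so there is no internal proof to compare against. Your argument is a correct reconstruction of the proof in that reference: the two endpoint bounds $\|C\|_{2\to 2}=1$ and $\|C\|_{1\to\infty}=\max_{i,j}|C_{ij}|$, the Riesz--Thorin interpolation giving $\|C\|_{p\to q}\le c^{2/p-1}$ for conjugate exponents with $1\le p\le 2$, the translation into R\'enyi entropies at conjugate orders $\alpha$ and $\beta=\alpha/(2\alpha-1)$, and the limit $\alpha\to 1^-$ are exactly the Maassen--Uffink argument, and the sign bookkeeping you worry about does come out right (dividing the logged inequality by the positive factor $(1-\alpha)/(2\alpha)$ yields $H_\alpha+H_\beta\ge -2\log c$ before the limit is taken). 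One point worth making explicit: with $v=Au$ and $w=Bu$ the connecting matrix is $C=BA^{\dagger}$, whose $L_\infty$ norm equals $L_\infty(AB^{\dagger})$ rather than the paper's literal $c(A,B)=L_\infty(AB)$; your parenthetical about basis conventions correctly flags this discrepancy in the paper's statement, and it is harmless in the paper's only application, where $A=I$ and $L_\infty(B)=L_\infty(B^{\dagger})$.
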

In particular, suppose $A B$ is a Hadamard matrix --- an $n\times n$
matrix $W=(w_{ij})$ is a Hadamard matrix if $W$ is unitary and if
$|w_{ij}|=1/\sqrt{n}$. Then $c(A,B) = 1/\sqrt{n}$ and
$H_2(Au)+H_2(Bu)\geq \log n$.

To apply this bound to our case, let $u^\dagger=\hat{e}_j^\dagger M
E_0$ and $v^\dagger=\hat{e}_j^\dagger M E_1$ where $\hat{e}_j$ is
the unit vector with $(\hat{e}_j)_j=1$ and $(\hat{e}_j)_k=0$ for
$k\neq j$. Then $u_d^\dagger = (M E_0)_{j,d}$ so $P(d|j,0)=|u_d|^2$
and, similarly, $P(d|j,1)=|v_d|^2$. We have $v = E_1^\dagger E_0 u$,
so taking $A$ to be the identity matrix and $B=E_1^\dagger E_0$ we
have
$$H_2(Au)+H_2(Bu) = H_2(u) + H_2(v) = h_{j,0}+h_{j,1}$$
and so $h_j \geq -\log c(I,E_1^\dagger E_0)$ from
Theorem~\ref{thm:entropy}. This bound is independent of $j$, i.e.,
the particular outcome the user measures.

Thus if the vendor's choice of encodings have $E_1^\dagger E_0$ a
Hadamard matrix, and the user starts with a uniform prior
distribution for the $n$ states, with $H=\log n$, the average
information gain for the user is
\begin{equation}\label{eq:gain}
\infoGain = \log n - h_j  \leq \log n -\log \frac{1}{\sqrt{n}} =
\frac{1}{2}\log n
\end{equation}

The encoding matrices corresponding to the example in
Section~\ref{sec:single:example} are $E_0$, $E_1$ equal to
\[
\begin{array}{ccc}
\begin{pmatrix}
\frac{1}{\sqrt{2}}&\frac{1}{\sqrt{2}}&0&0\cr
\frac{1}{\sqrt{2}}&-\frac{1}{\sqrt{2}}&0&0\cr
0&0&\frac{1}{\sqrt{2}}&\frac{1}{\sqrt{2}}\cr
0&0&\frac{1}{\sqrt{2}}&-\frac{1}{\sqrt{2}}
\end{pmatrix}&&
\begin{pmatrix}
\frac{1}{\sqrt{2}}&\frac{1}{\sqrt{2}}&0&0\cr
0&0&\frac{1}{\sqrt{2}}&\frac{1}{\sqrt{2}}\cr
\frac{1}{\sqrt{2}}&-\frac{1}{\sqrt{2}}&0&0\cr
0&0&\frac{1}{\sqrt{2}}&-\frac{1}{\sqrt{2}}
\end{pmatrix}
\end{array}\]
respectively. The encoding $E_1$ includes the permutation of the
bits of the database.
$E_1^\dagger E_0$ has all entries equal to $\pm 1/2$, i.e., is a
$4\times 4$ Hadamard matrix. Thus, from Eq.~\ref{eq:gain} with
$n=4$, the average amount of information the user gains is at most
one bit. Moreover, this bound is tight: the measurement examples
described in Section~\ref{sec:single:example} provide the user one
bit of information.

Note the bound applies to the \emph{expected} information gain,
averaged over the choice of encodings. As described in
Section~\ref{sec:single:example}, the user could choose to invert
one of the encodings, e.g., take $M=E_0^\dagger$. If the user
correctly guesses the encoding used by the vendor, this procedure
gives the user both bits of the database, i.e., $h_{j,0}=0$. But if
instead the vendor selected $E_1$, the measured outcome is
completely uninformative, with $h_{j,1}=2$. So the average of these
equally likely possibilities, $h_j = 1$, satisfies the bound.

Thus we establish that a vendor, using maximally incompatible bases
for encoding bits, can arrange for the user to learn no more than
one bit of information, on average, about the two bits in the
database.

\subsection{Multiple-bit exchange}

Theorem~\ref{thm:entropy} applies to vectors with any number of
components $n$, not just $n=4$ as used for the single-bit database
items illustrated in Section~\ref{sec:single:example}. It is well
known that for any $m>0$, there exist $2^m\times 2^m$ Hadamard
matrices. The standard construction is
$W_{2^m}=\underbrace{W_2\otimes W_2\otimes \cdots \otimes W_2}_m$,
where $W_2=\frac{1}{\sqrt{2}}\begin{pmatrix}1&1\cr -1 &
1\end{pmatrix}$.  Hence, we can generalize the above scheme to
exchanging two items, each with $m$ bits.

Suppose that the vendor has two messages $d_0,d_1$, each with $m$
bits. Let $\ell=2^m$. Pick unitary $\ell\times\ell$ matrices $A_0$
and $A_1$ such that $A_0^\dagger A_1$ is Hadamard. Thus $A_1^\dagger
A_0$ is also Hadamard. Let $C_0 = A_0 \otimes A_1$, $C_1 = A_1
\otimes A_0$ and $P$ be the permutation matrix to reverse the order
of the items, i.e., mapping $d_0 d_1$ to $d_1 d_0$. The vendor's
encoding operators are then $E_0=C_0$ and $E_1=C_1 P$. That is, for
the first encoding, the vendor sends the column of $C_0$ indexed by
$d_0 d_1$ and for the second encoding sends the column of $C_1$
indexed by $d_1 d_0$, as in the single bit example. With this
formulation, the discussion of Section~\ref{sec:single:general}
applies directly to this case.

Let $M_j=A_j^\dagger\otimes A_j^\dagger$ for $j=0,1$. It is easy to
verify that $M_j C_i$ is the tensor product of two $\ell\times\ell$
matrices with the $((j-i)\mod 2)$-th component being the identity
matrix. This is coincident with the position of item $d_j$ in the
permuted string when the vendor uses the encoding matrix $C_i$.
Therefore, if the user applies the measurement $M_j$ and later
receives the value of $i$, he learns $d_j$ with probability $1$,
regardless of which $C_i$ the vendor uses.

We observe that
\[E_1^\dagger E_0 = P^\dagger C_1^\dagger C_0 = P^\dagger \left( (A_1^\dagger A_0) \otimes (A_0^\dagger A_1) \right)\,,\] which is a
$n \times n$ Hadamard matrix where $n=2^{2m}=\ell^2$ is the number
of possible configurations for the database. Thus, by
Theorem~\ref{thm:entropy}, we have that no matter which measurement
the user applies, if the prior distribution of the messages is
uniform, then the user can learn at most $\log \ell=m$ bits in
expectation.

This construction extends the result of
Section~\ref{sec:single:general} to show a vendor can arrange for
the user to learn no more than $m$ bits of information, on average,
about the two $m$-bit items in the database.

To make it easy to encode and measure, we can let $A_0,A_1$ take the
form of tensor product of $m$ $2\times 2$ matrices so that the
encoding and measurement can be done by single bit operations. One
choice with this property is $A_0=I$ and $A_1=W_\ell$.

There are two drawbacks in the above scheme.  First, although in
expectation the user learns $m$ bits, his chance of learning both
messages is $1/2$ by guessing right which encoding the vendor uses.
One way to reduce this probability is to split $d_0 = d_{11}\oplus
d_{12}\oplus\dots\oplus d_{1r}$ and $d_1 = d_{21}\oplus
d_{22}\oplus\dots\oplus d_{2r}$, where $\oplus$ represents bitwise
xor.  Then the vendor applies the above scheme to the pairs
$(d_{1i},d_{2i})$ for $1\leq i\leq r$.  The honest user can still
learn the message according to his choice.  But if the user wants to
learn both messages, he will have to guess right for each $1\leq
i\leq r$ which happens with probability $1/2^r$.

The other property of our scheme is that the user may choose to
learn any $m$ bits in the combined message $c = d_0 d_1$, rather
than just all $m$ bits of one of the two items.  This would be
useful if the vendor would like to give the user the freedom to
decide which bits to learn.  On the other hand, this property may be
considered as a violation of security, for example, when the vendor
would like the user to learn only one item but not even partial
information about the other item.  This is of course impossible to
achieve in the information theoretic sense.  But the following
simple scheme may prevent the user from learning individual bits of
the original message.  In the modified scheme, the vendor treats
each message as an element in Galois field $GF(2^m)$ and picks two
random $a\neq 0,b\in GF(2^m)$ and applies the above protocol to the
messages $d_0'=a\cdot d_0+b$ and $d_1'=a\cdot d_1+b$, where $\cdot,
+$ are the arithmetics in $GF(2^m)$.  At the end, the vendor
announces $a,b$ together with the encoding scheme he uses. Clearly,
an honest user is still able to recover the message according to his
choice. On the other hand, if the user only learns, for example, a
constant fraction of the message $d_0'$ before he knows $a,b$,
intuitively, it is unlikely that the user can determine any
individual bit of $d_0$.

\subsection{Generalized measurements}\label{sec:POVM}

Our discussion considered users making conventional projective
measurements on the states they receive from the vendor. A more
general possibility is positive operator valued measurements
(POVM)~\cite{chuang00}. In some cases, such measurements can
distinguish quantum states with higher probability than any
projective measurement. In this section, we briefly describe these
measurements and show they provide no benefit in the context of the
two-item database described above.

A POVM consists of a set of $N$ operators $\{R_1,\ldots,R_N\}$ in an
$n$-dimensional Hilbert space. The operators are not necessarily
Hermitian, orthogonal or invertible, and $N$ may be larger than $n$.
These operators satisfy
\begin{itemize}
\item completeness
\begin{equation}
\sum_j R_j^\dagger R_j = I
\end{equation}

\item nonnegativity: for every vector $x$
\begin{equation}
x^\dagger R_j^\dagger R_j x \geq 0
\end{equation}
\end{itemize}

For a system in a pure state $\ket{\psi}$, measurement gives one of
the outcomes $j=1,\ldots,N$, with probability for outcome $j$ equal
to $P(j)=\braOPket{\psi}{R_j^\dagger R_j}{\psi}$. This probability
is nonnegative due to the nonnegativity condition, and $\sum_j
P(j)=1$ due to the completeness condition. The state after
measurement is
\begin{displaymath}
\frac{1}{\sqrt{P(j)}} \; R_j \ket{\psi}
\end{displaymath}

\paragraph{Example.} For a projection measurement, the POVM
consists of $N=n$ orthogonal projection operators: $R_j =
\ket{e_j}\aket{e_j} = e_j e_j^\dagger$ where $e_j$ is the $j^{th}$
unit basis vector of the measurement. The probability to observe
outcome $j$ for state $\ket{\psi}$ is
$\braket{\psi}{e_j}\braket{e_j}{\psi}$ or $|\braket{e_j}{\psi}|^2$,
in which case the resulting state is
$\ket{e_j}\braket{e_j}{\psi}/|\braket{e_j}{\psi}|$, which is
$\ket{e_j}$ up to a phase factor.

In our context of two database items, each with $m$ bits, the
Hilbert space has dimension $n=2^{2m}$. The vendor picks one of two
encoding operators, $E_0$ or $E_1$, and sends the state $e=E_i d$ to
the user. Suppose the user applies the POVM $\{R_j\}$. With
probability $P(j|d,i) = (E_i^\dagger R_j^\dagger R_j E_i)_{dd}$ the
user observes outcome $j$ conditioned on the vendor's encoding
choice $i$ and value of the database $d$.

Eq.~\ref{eq:Bayes} gives user's inference of the distribution of
database values from the outcome of the measurement and the vendor's
announced choice of encoding. The sum in the denominator of
Eq.~\ref{eq:Bayes}, $\sum_{d'} P(j|d',i)$, is $\trace(E_i^\dagger
R_j^\dagger R_j E_i)$ which equals $\trace(R_j^\dagger R_j)$ since
$E_i$ is unitary. In particular, this sum is independent of the
vendor's encoding choice $i$. We denote this sum by $s^2$ with
$s>0$. Eq.~\ref{eq:Bayes} then gives
$$P(d|j,i) = \frac{P(j|d,i)}{s^2} = (E_i^\dagger S^\dagger S
E_i)_{dd}$$ where $S=R_j/s$ so $\trace(S^\dagger S)=1$.

Let $A=S E_0$. Then $S E_1 = A U$ where $U=E_0^\dagger E_1$ is a
Hadamard matrix for the choice of encodings described above. Thus
$P(d|j,0)=(A^\dagger A)_{dd}$ and $P(d|j,1)=(U^\dagger A^\dagger A
U)_{dd}$.

By orthogonalizing $A$, we have $A^\dagger A=\sum_{r=1}^n \lambda_r
v_r v_r^\dagger$ where $\lambda_r\geq 0$, $\sum_r \lambda_r=1$, and
the $v_r$'s are mutually orthogonal unit vectors. For each $r$ we
define two probability distributions over the values $d$ in the
database: $p^{(r)}_d = |(v_r)_d|^2$ and $q^{(r)}_d = |(U^\dagger
v_r)_d|^2$. Then $P(d|j,0)=\sum_r \lambda_r p^{(r)}_d$ and
$P(d|j,1)=\sum_r \lambda_r q^{(r)}_d$.

By Theorem~\ref{thm:entropy}, we have $H(p^{(r)})+H(q^{(r)})\geq
\log n$ because $U$ is a Hadmard matrix. Using this bound and the
convexity of the entropy function, we have
\begin{eqnarray*}
H(\{P(d|j,0)\})+H(\{P(d|j,1)\}) & = & H\left(\sum_r \lambda_r
p^{(r)}\right)+H\left(\sum_r \lambda_r
q^{(r)}\right)\\
& \geq & \sum_r \lambda_r \left(H(p^{(r)})+H(q^{(r)}) \right)\\
& \geq & \log n\sum_r \lambda_r = \log n\,.
\end{eqnarray*}
Thus Eq.~\ref{eq:gain} applies to this POVM, giving the same bound
as for the projective measurements considered above. Since
projective measurements can achieve the lower bound, we see POVM
provides no advantage for the user in this context.

\section{One out of $k$ exchange}

The previous section considered a database of two items.  We showed
how the user could privately learn one item, and no more, when the
vendor selects randomly from two encodings related by a Hadamard
matrix and relies on limited coherence time to force the user to make
a measurement before the choice of encoding is announced.

A natural extension is to a database with $k$ items, each consisting
of $m$ bits. The mechanism would then allow the user to privately
learn a limited number of bits, on average, as well as provide an
opportunity to learn all the bits of any one item with probability
one. \tbl{notation} summarizes our notation.

\begin{table}
\begin{center}
\begin{tabular}{cl}
  \hline
  $m$ & number of bits per item \\
  $k$ & number of items in database \\
  $\ell = 2^m$ & number of possible values for each item \\
  $n = 2^{k m}$ & number of possible database configurations \\
  $A_i$ & a $2^m \times 2^m$ matrix for encoding one item \\
  $C_i$ & an $n \times n$ encoding matrix \\
  \hline
\end{tabular}
\end{center}
\caption{\tbllabel{notation}Summary of notation.}
\end{table}

The scheme is similar to the case of $k=2$.  The vendor chooses
$k$ encodings $C_0,\cdots,C_{k-1}$ in the joint space of
$d_0,\cdots, d_{k-1}$, each $m$-bit long.  The vendor chooses one
encoding randomly among those candidate encodings to encode the
items and send to the user.  The user measures the state before
the vendor announces the encoding.  We would like the following
properties hold:
\begin{enumerate}
  \item For each $0\leq i\leq k-1$, there exists a measurement $M_i$
  such that the user learns $d_i$ for sure after the vendor announces
  the encoding.
  \item For any measurement $M$ made before the vendor announces the encoding, including POVM, the user can only learn at most
  $m$ bits when averaged over the vendor's choice of encoding.
\end{enumerate}
As an extension to $k=2$ case, we consider the following encoding
scheme.  Let $A_0, A_1,\dots,A_{k-1}$ be $\ell\times\ell$ unitary
matrices. Let
\begin{equation}\label{eq:Ci}
C_i = A_i\otimes A_{i+1}\otimes \cdots \otimes A_{k-1}\otimes A_{0}
\cdots \otimes A_{i-1}\,.
\end{equation}

For each $0\leq i\leq k-1$, we use $C_i$ to encode the concatenated
string $c_i = d_i d_{i+1}\dots d_{k-1}d_{0}\dots d_{i-1}$, so the
corresponding encoding matrix is $E_i = C_i P_i$ where $P_i$ is the
permutation giving this reordering of the string. Let
$M_j=A_j^\dagger\otimes A_j^\dagger\cdots\otimes A_j^\dagger$. It is
easy to verify that $M_j C_i$ is the tensor product of $k$
$2^m\times 2^m$ matrices with the $((j-i)\mod k)$-th component being
the identity matrix.  This is coincident with the position of $d_j$
in the concatenated string when using the encoding matrix $C_i$.
Therefore, if the user applies the measurement $M_j$ and later
receives the value of $i$, he learns $d_j$ with probability $1$,
regardless of which $C_i$ the vendor uses.  It is however harder to
guarantee condition 2 which is implied from the following property:
\begin{enumerate}
\item[2'.]
For any unitary vector $u\in \C^n$ where $n=2^{km}$,
\begin{equation}\label{eq:hk}
H_2(C_0 u)+\cdots +H_2(C_{k-1}u)\geq (k-1)\log n\,.
\end{equation}
\end{enumerate}
According to Theorem~\ref{thm:entropy}, when $k=2$, this property is
satisfied by letting $C_0=I$ and $C_1=W_n$.  However, we do not know
the existence of such matrices for $k>2$.

Given the difficulty of finding matrices that satisfy the condition
(2'), we can instead consider the case when the user is honest,
i.e.\ the user always applies one of the measurements $M_0, \ldots,
M_{k-1}$ so to learn one of $d_i$'s for sure. In this case, to
minimize the information leaked, we would like to minimize
$L_\infty(A_j^{\dagger} A_i)$ for $i\neq j$ by
Theorem~\ref{thm:entropy}. For example, if we can construct a set of
$k$ matrices $A_0,\cdots, A_{k-1}$ such that $A_i^\dagger A_j$ is
Hadamard for any $i\neq j$, then the scheme using Eq.~\ref{eq:Ci}
has the honest user learning no information other than the target
item.
Such sets of matrices exist for $k\leq 2^m+1$. Specifically, for any
$m\geq 1$, there exist $2^m+1$ complex matrices $A_i$'s of dimension
$2^m\times 2^m$ such that $A_i^\dagger A_j$ is Hadamard for any
$i\neq j$~\cite{ccks00}.  We refer to~\cite{hsp06} for a simpler
construction.  Therefore, using the construction in~\cite{hsp06} as
the encoding matrices, we can achieve perfect privacy for honest
users as long as $k\leq 2^m+1$.

Furthermore, by Theorem~\ref{thm:entropy}, the encoding of
Eq.~\ref{eq:Ci} with $A_i^\dagger A_j$ Hadamard for any $i\neq j$
leaks at most $km/2=\frac{1}{2}\log n$ bits of information even when
users pick arbitrary measurements because
\begin{eqnarray}\label{eq:mk}
&&H_2(C_0 u)+\cdots +H_2(C_{k-1}u)\nonumber \\
&=&\frac{1}{k-1}\sum_{i\neq j}(H_2(C_i u)+H_2(C_j u))\\
&\geq& \frac{k}{2}\log n\,. \quad\mbox{by Theorem~\ref{thm:entropy}}
\nonumber
\end{eqnarray}
Hence, the encoding leaks at most $\log n-\frac{1}{k}\sum_{i}H_2(C_i
u)\leq \frac{1}{2}\log n$ bits. The convexity argument of
Section~\ref{sec:POVM} applies in this case as well. Thus instead of
the $m$ bits an honest user learns, for general measurements we have
the weaker bound where the user could learn up to $k m/2$ bits.

While we are unable to show any bound other than Eq.~\ref{eq:mk}, by
numerical experiments with small values of $k$ and $m$, we observe
that the number of bits leaked is approximately $0.4k^{0.7}m$ bits.
We note that the lower bound on the sum of entropies
in~\cite{sanchez93}, for $N+1$ complementary observables in
$N$-dimensional Hilbert space, does not apply to our case where the
dimension of the encoding matrices is $2^{km}$, much larger than
$k$, the number of matrices. In addition, the construction
in~\cite{hsp06} is easy to implement physically as they are a
multiplication of a diagonal matrix and the Walsh-Hadamard
transform.

\paragraph{Example.} As a concrete example when $k=3$, consider three matrices $A_0, A_1, A_2$ defined
as follows.  We let $A_0 = I$, $A_1=\alpha_1\otimes \cdots \otimes \alpha_1$,
$A_2=\alpha_2\otimes\cdots\otimes \alpha_2$ where
\[
\begin{array}{ccc}
\alpha_1 = \frac{1}{\sqrt{2}}\begin{pmatrix} 1&1\cr 1&-1
\end{pmatrix}&&
\alpha_2 = \frac{1}{\sqrt{2}}\begin{pmatrix} 1&1\cr -i&i
\end{pmatrix}
\end{array}\]

It can be readily verified that $A_0,A_1,A_2$ satisfy the property
that $A_i^\dagger A_j$ is Hadamard for $i\neq j$.

The construction in~\cite{hsp06} only works when $k\leq 2^m+1$. For
larger $k$'s, we observe that if we simply pick random unitary matrix
according to Haar measure~\cite{katz99}, then each $A_i^\dagger A_j$
is nearly Hadamard by the following lemma.

\begin{lemma}
For two randomly picked $\ell\times \ell$ orthonormal matrices
$A,B$,
\[\prob{L_\infty(A^\dagger B)\geq t}\leq 4\ell^2 e^{-t^2\ell/2}\,.\]
\end{lemma}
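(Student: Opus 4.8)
The plan is to reduce the statement to a tail bound for a single entry of one Haar-random matrix and then apply a union bound over the $\ell^2$ entries. First I would exploit the left- and right-invariance of the Haar measure: if $A$ and $B$ are independent and Haar-distributed, then $U := A^\dagger B$ is again Haar-distributed, so it suffices to bound $\prob{L_\infty(U)\geq t}$ for a single random unitary $U$. Since $L_\infty(U)\geq t$ holds exactly when $|U_{ij}|\geq t$ for some pair $(i,j)$, a union bound gives $\prob{L_\infty(U)\geq t}\leq \ell^2\max_{i,j}\prob{|U_{ij}|\geq t}$, and by symmetry of the Haar measure every entry has the same marginal, so I only need to control $\prob{|U_{11}|\geq t}$.

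Next I would identify the marginal law of a single entry. A fixed column of a Haar-random unitary is uniformly distributed on the unit sphere of $\C^\ell$, which, writing each coordinate as $\mathrm{Re}+\i\,\mathrm{Im}$, is the uniform distribution on the real sphere $S^{2\ell-1}\subset\R^{2\ell}$. Thus $\mathrm{Re}(U_{11})$ and $\mathrm{Im}(U_{11})$ are each a single coordinate of a uniformly random unit vector in $\R^{2\ell}$. The key analytic input is the standard concentration-of-measure tail for such a coordinate: for $v$ uniform on $S^{N-1}$ one has $\prob{|v_1|\geq s}\leq 2e^{-Ns^2/2}$, which follows either from L\'evy's lemma (the map $v\mapsto v_1$ is $1$-Lipschitz with median $0$) or from directly estimating the marginal density, which is proportional to $(1-x^2)^{(N-3)/2}$.

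I would then combine these pieces. If $|U_{11}|\geq t$ then $\mathrm{Re}(U_{11})^2+\mathrm{Im}(U_{11})^2\geq t^2$, so at least one of $|\mathrm{Re}(U_{11})|$ and $|\mathrm{Im}(U_{11})|$ is at least $t/\sqrt{2}$. Applying the sphere tail with $N=2\ell$ and $s=t/\sqrt{2}$ gives $\prob{|\mathrm{Re}(U_{11})|\geq t/\sqrt2}\leq 2e^{-\ell t^2/2}$, and likewise for the imaginary part, so $\prob{|U_{11}|\geq t}\leq 4e^{-\ell t^2/2}$ (the first factor $2$ from the two-sided tail, the second from splitting into real and imaginary parts). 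Feeding this into the union bound over the $\ell^2$ entries yields $\prob{L_\infty(A^\dagger B)\geq t}\leq 4\ell^2 e^{-\ell t^2/2}$, as claimed.

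The main obstacle is certifying the exponent constant $\ell/2$ exactly, since this depends on using the correct real dimension $2\ell$ together with the threshold split $t/\sqrt2$, and on invoking a version of the sphere tail whose exponent is the sharp $Ns^2/2$ rather than a weaker $cNs^2$. Everything else (the factor $4$ and the $\ell^2$) is routine bookkeeping. As a sanity check and an even cleaner alternative, one can stay within the complex structure directly: $|U_{11}|^2$ is $\mathrm{Beta}(1,\ell-1)$-distributed, so $\prob{|U_{11}|\geq t}=(1-t^2)^{\ell-1}\leq e^{-(\ell-1)t^2}$, which already gives a per-entry bound stronger than $4e^{-\ell t^2/2}$ for $\ell\geq 2$. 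This confirms the stated inequality is comfortably within reach, and the only real care needed is pinning down the precise constant and the factor $4$.
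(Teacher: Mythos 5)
Your proof is correct and follows essentially the same route as the paper's: the paper simply cites the sphere measure-concentration bound $\prob{|u\cdot v|\geq t}\leq 4e^{-t^2\ell/2}$ for two random unit vectors (the columns whose inner products form the entries of $A^\dagger B$) and implicitly union-bounds over the $\ell^2$ entries, which is exactly your per-entry tail plus union bound. You have merely filled in the derivation of the per-entry constant (real/imaginary split on $S^{2\ell-1}$, threshold $t/\sqrt{2}$) that the paper leaves to the cited reference, and your Beta-distribution sanity check is a nice confirmation.
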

\begin{proof}
The proof follows from the measure concentration result on
sphere~\cite{matousek02}. That is, for any two randomly picked unit
vector $u,v$,
\[\prob{|u\cdot v|\geq t}\leq 4 e^{-t^2\ell/2}\,.\]
\end{proof}

Thus, if we pick $k$ random $2^m \times 2^m$ orthonormal matrices
$A_0,\cdots, A_{k-1}$, then
\begin{equation}\label{eq:con}
\prob{\exists i\neq j\,\, L_\infty(A_i^\dagger A_j)\geq t}= O(k^2 2^{2m} e^{-t^2 2^m/2})\,.
\end{equation}

If we let $t\geq c \frac{\log k+m}{2^{m/2}}$ for some constant $c>0$,
then with high probability, $L_\infty(A_i^\dagger A_j)\leq t$ for all
$0\leq i,j\leq k-1$ and $i\neq j$.  Thus, we have the following.
\begin{theorem}\label{thm:entropyk}
For $m=\Omega(\log k)$, if we pick $k$ random orthonormal matrices,
then with high probability, for an honest user, the information he
learns about the other items is $O(k\log(m+\log k))$ bits.
\end{theorem}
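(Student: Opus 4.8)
The plan is to condition on the high-probability event established just above: for a suitable constant $c$, we have $L_\infty(A_i^\dagger A_j)\leq t$ simultaneously for all $i\neq j$, where $t=c(\log k+m)/2^{m/2}$. By Eq.~\ref{eq:con} this event fails with probability $O(k^2 2^{2m}e^{-t^2 2^m/2})$; substituting this $t$ makes the exponent $-\tfrac{c^2}{2}(\log k+m)^2$ while $k^2 2^{2m}=2^{2(\log k+m)}$, so the failure probability is at most $2^{2(\log k+m)-\Theta((\log k+m)^2)}$, which is small once $c$ is a large enough constant. The hypothesis $m=\Omega(\log k)$ guarantees $\log k+m=\Theta(m)$, hence $t=O(m/2^{m/2})<1$, so the entropy bounds below are meaningful.

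I would next exploit the tensor-product structure of the honest user's measurement. Fix the target index $j$, so the user applies $M_j=A_j^\dagger\otimes\cdots\otimes A_j^\dagger$. For any encoding, $M_j C_i=(A_j^\dagger A_i)\otimes(A_j^\dagger A_{i+1})\otimes\cdots\otimes(A_j^\dagger A_{i-1})$ is a tensor product of $k$ factors, the factor at position $(j-i)\bmod k$ being $A_j^\dagger A_j=I$ and every other factor being some $A_j^\dagger A_p$ with $p\neq j$. Acting on the received state $E_i d$, the outcome factorizes over the $k$ items: the item in the identity slot is read off as $d_j$ with certainty, while for each other item $p$ the outcome is distributed as the squared magnitudes of a row of $A_j^\dagger A_p$. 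Since the prior is uniform and the outcome factorizes, the posterior over the whole database is a product of independent posteriors over the individual $d_p$, so its entropy is the sum of the per-item posterior entropies.

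The per-item estimate is then elementary. For $p\neq j$ the posterior over $d_p$ is, up to normalization, a probability vector whose entries are squared entries of a row of $A_j^\dagger A_p$; since $L_\infty(A_j^\dagger A_p)\leq t$, each atom is at most $t^2$, and any distribution with largest atom at most $t^2$ has entropy at least $-2\log t$. Hence the information gained about $d_p$ is at most $m-(-2\log t)=m+2\log t=2\log(m+\log k)+O(1)$ by the value of $t$ fixed above. Summing over the $k-1$ non-target items, the total information about the other items is at most $(k-1)\bigl(2\log(m+\log k)+O(1)\bigr)=O(k\log(m+\log k))$. This bound holds for each encoding $i$ separately and therefore survives averaging over the vendor's random choice of encoding, yielding the claim.

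The main obstacle is the bookkeeping of the first paragraph: one must pin $t$ into the narrow window where the union bound over pairs in Eq.~\ref{eq:con} still gives high probability, yet $t$ is small enough that $-2\log t\approx m-2\log(m+\log k)$ remains a useful lower bound on each per-item entropy. It is exactly the quadratic-versus-linear comparison in the exponent, combined with $m=\Omega(\log k)$, that reconciles these two demands. The remaining steps --- the tensor factorization of $M_j C_i$ (already verified in the text for the honest measurement) and the bound $H\geq-\log p_{\max}$ --- are routine once that window is identified.
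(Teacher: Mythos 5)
Your proposal is correct and follows essentially the same route as the paper's (much terser) proof: condition on the event from Eq.~\ref{eq:con} that $L_\infty(A_i^\dagger A_j)\leq t$ for all $i\neq j$ with $t=c(\log k+m)/2^{m/2}$, bound the information gained about each non-target item by $m+\log t^2=O(\log(m+\log k))$ via the maximum-atom lower bound on entropy, and sum over the $k-1$ other items. The paper compresses this into two sentences (writing $m-\log t^2$ where $m+\log t^2$ is evidently meant), while your write-up supplies the tensor-factorization of $M_jC_i$, the product structure of the posterior, and the probability bookkeeping that the paper leaves implicit.
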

\begin{proof}
By Eq.~\ref{eq:con}, the information an honest user learns about any
other item is $m -\log t^2 = O(\log(m+\log k))$ bits. Thus, in total
it is $O(k\log(m+\log k))$ bits.
\end{proof}

One problem with random matrices is that it is hard to realize an
encoding and perform a measurement.  It would be good if each $A_i$
can be further decomposed into the tensor product of smaller
matrices. This can be done when $k$ is much smaller than $m$.  Let
$r=\Omega(2^k m)$.  We pick a random $r\times r$ matrix $B_i$, and
let $A_i$ be the $m/\log r$ tensor product of $B_i$. It can be shown
by a similar argument that an honest user learns $o(m)$ bits of the
other items in addition to the item he chooses. In the case when $k$
is much smaller than $m$, $r$ is much smaller than $2^m$, and it
reduces the complexity of encoding and measurement.

In the above discussion, we consider the natural choice where the
number of encodings used is the same as the number of items.
This, however, does not have to be the case.  More generally, we
can assume that there are $K$ encodings where $K\neq k$.  It is
possible that by using $K>k$, less information is leaked.

To further reduce the complexity, in the above construction, we can
pick a unitary matrix $A$ and let $A_i = A^i$ for $0\leq i\leq k-1$.
This way, the encoding and measurement are simple as there is only
one operator to be implemented.  To reduce the information leaked,
we would like $A$ satisfy the following properties: $A^k = I$ and
$A^i$ is a Hadamard (or nearly Hadamard) matrix for any $1\leq i\leq
k-1$. For $k=3$, we can let $A$ be the tensor product of the
following matrix:
\[
\frac{e^{i \pi/12}}{\sqrt{2}}\begin{pmatrix} 1&1\cr -i&i
\end{pmatrix}
\]

For general $k$'s, again we do not know how to construct such $A$.
It would be interesting to know whether, for any $k$ and
sufficiently large $n$, there exists an $n\times n$ unitary matrix
$A$ so that $A^k=I$ and $A^i$ is Hadamard for $1\leq i\leq k-1$.

When the items contain many bits ($m$ is large), another approach
is to encrypt the items and use the above protocol with the
corresponding decryption keys (which generally use considerably
fewer bits than the size of the encrypted items so the quantum
mechanism would not have to deal with the large number of bits in
each item). In this case, the user picks an item by arranging the
measurement to learn the bits of that item's key. The limit on how
much information the user can learn would then make it difficult
to learn multiple items, provided the keys are long enough. For
example, with the 2-encoding case discussed above, the keys must
be long enough that guessing the remaining bits when half are
known is still not feasible.

\section{Discussion}

In this paper, we described a simple, private database query
protocol using a quantum communication channel. It's ability to
maintain privacy for the vendor relies on an assumption of limited
coherence times for storing and manipulating quantum states. Thus
this protocol is not only suitable for early development of
quantum information technology with limited capability, but
specifically takes advantage of those limitations.

Because the user can choose to learn about combinations of bits of
the database items instead of just a single item for sure, our
protocol presents a larger range of choices for the user than
conventional treatments of oblivious transfer. The extent to which
this would be beneficial depends on the economic context, and
associated incentives, in which the protocol is used. One possible
application is as a component of digital property rights
management. Specifically, the protocol could be useful in
situations where the main economic value is from the combined
inputs of user and vendor, rather than simply the data from the
vendor. That is, private computation of a function of both the
vendor's data and the user's choice as influenced by private
information held by the user. In this case, with a reasonably
large number of items and user choices, even if the user were to
reveal the result to other potential users, that information would
likely have low value to the other users unless they happened to
wish to make the same choice as the original user. Thus those
additional users would also need to purchase the information from
the vendor rather than attempting to free ride on a single user's
purchase.

An interesting direction for future study is generalizing the
protocol to multiple users who have access to additional quantum
channels among themselves. In particular, in some economic
scenarios, users may wish to ensure coordinated choices while still
maintaining as much privacy as possible. In such situations, it
would be useful to identify any benefits of the quantum channel
among users, particularly if limited to pairwise entanglement which
is easier to implement than more general entangled states. The
potential economic benefits of such a protocol should also be
compared with that available with using classical
correlation~\cite{meyer04}. Moreover, in practice, the prior
distribution of database values need not be the uniform distribution
we considered and it would be of interest to evaluate the
consequences of such prior knowledge on the part of the receiver of
the quantum state.

Our quantum mechanism can be simulated classically by having each
player to send a choice of operator to a trusted third party. This
observation, which also applies to other quantum
games~\cite{vanenk02}, means the practical benefit of such quantum
mechanisms depends on the context of the game, e.g., the differences
in security and communication costs as well as the level of trust
assumed for the central institution. For instance, the quantum version
allows only a single measurement of the outcome rather than revealing
the full database, and hence can provide additional privacy for the
vendor. Such privacy can also be achieved via conventional
cryptographic methods but with security based on the apparent
difficulty of solving certain problems, e.g., factoring, rather than
inherent in quantum physics.  In addition, there is a large overhead
when using cryptographic methods, especially
when the number of bits involved is small.

Finally, using game theory to evaluate behavior of economic
mechanisms gives at best approximations of real human behavior. In
this case, rationality dictates that each individual has a full
understanding of the quantum mechanical implications of the
measurement operator choices. How well this describes the actual
behavior of people involved in quantum games is an interesting
direction for future work with laboratory experiments involving
human subjects.

\section*{Acknowledgments}

We thank Kay-Yut Chen, David Fattal, Saikat Guha and Phil Kuekes for
helpful discussions.


\end{document}